\def\nottoobig#1{{\hbox{$\left#1\vcenter to1.111\ht\strutbox{}\right.\n@space$}}}
\newtheorem{fact}{Fact}
\newtheorem{theorem}{Theorem}[section]
\newtheorem{lemma}[theorem]{Lemma}
\newtheorem{definition}[theorem]{Definition}
\newcommand{\nat}{{\mathbb N}}
\newcommand{\poly}{{\rm poly}}
\def\nottoobig#1{{\hbox{$\left#1\vcenter
to1.111\ht\strutbox{}\right.\n@space$}}}
\newcommand{\prob}{{\rm Prob}}
\newcommand{\ie}{$\mbox{i.e.}$}
\newlength{\filength}
\newsavebox{\gcbox}
\sbox{\gcbox}{\framebox[\filength]{\rule{0ex}{2ex}}}
\newcommand{\qedblob}{\mbox{\rule[-1.5pt]{5pt}{10.5pt}}}
\def\literalqed{{\ \nolinebreak\hfill\mbox{\qedblob\quad}}}
\def\qed{\literalqed}
\newcommand{\singlespacing}{\let\CS=
\@currsize\renewcommand{\baselinestretch}{1}\tiny\CS}
\newcommand{\singlespacingplus}{\let\CS=
\@currsize\renewcommand{\baselinestretch}{1.25}\tiny\CS}
\newcommand{\doublespacing}{\let\CS=
\@currsize\renewcommand{\baselinestretch}{1.75}\tiny\CS}
\newcommand{\draftspacing}{\let\CS=
\@currsize\renewcommand{\baselinestretch}{2.0}\tiny\CS}
\def\zo{\{0,1\}}
\def\mapping{\rightarrow}
\def\@listI{\leftmargin\leftmargini \parsep 4.5pt plus 1pt minus 1pt\topsep6pt plus 2pt minus 2pt \itemsep  2pt plus 2pt minus 1pt}
\let\@listi\@listI
\author{ {Marius Zimand\/}
\thanks{  Department of Computer and Information Sciences, Towson University,
Baltimore, MD.; email: mzimand@towson.edu; http://triton.towson.edu/\~{ }mzimand.
The author is supported in part
by NSF grant CCF 1016158.
This is a part of the conference paper~\cite{zim:c:extadvice}. It corrects an erroneus result from there, namely Theorem 4.8 from~\cite{zim:c:extadvice} (see footnote in Fact~\ref{t:boundspolysizeinformal}).}}
\date{ }
\title{Nonuniform Kolmogorov Extractors}
\begin{document}

\maketitle

\begin{abstract}
We  establish tight bounds on the amount on nonuniformity that is necessary for extracting a string with randomness rate $1$ from a single source of randomness with lower randomness rate. More precisely, as instantiations of more general results, we show that
	while $O(1)$ amount of advice regarding the source is not enough for extracting a string with randomness rate $1$ from a source string with constant subunitary random rate, $\omega(1)$ amount of advice is.
\end{abstract}

{\bf Keywords:} random strings, randomness extraction, Kolmogorov extractors.
\smallskip

\section{Introduction}
By and large, randomness extraction is an algorithmical process that constructs a source of randomness of high quality from one or several sources of lower quality. If we restrict to the case when there is only one input source, one wants to design an effective transformation $E$ from the set of $n$-bit strings to the set of $m$-bit strings such that for any source $x$ with randomness at least $k$ (where $k < n$), $E(x)$ has randomness $\approx m$. It is desirable to have $m \approx k$ (\ie, to extract all, or almost all, of the randomness in the source). The problem of randomness extraction has been modeled in two ways. In the first model (which has been studied extensively in computational complexity theory), a source is a probability distribution $X$ over $\zo^n$ and its randomness is given by the min-entropy $H_\infty(X)$. In the second model, a source is a string $x \in \zo^n$, its randomness is given by its Kolmogorov complexity $C(x)$, and the algorithmical procedure is called a Kolmogorov extractor. In this paper, the focus is on Kolmogorov extractors. 

Thus, Kolmogorov extractors are procedures that increase the Kolmogorov complexity rate of strings and sequences. Their explicit study was initiated by Fortnow, Hitchcock, A.~Pavan, Vinodchandran and Wang~\cite{fhpvw:c:extractKol} for the case of finite strings and by Reimann~\cite{rei:t:thesis} for the case of infinite strings. The recent paper~\cite{zim:j:kolmextractsurvey} is a survey of this field.

 It is well-known that randomness extraction by \emph{uniform} procedures from a single source is not possible (if we exclude some trivial cases). We investigate the amount of \emph{non-uniformity} that is necessary for Kolmogorov extractors that extract from one source. As a consequence of a result of Vereshchagin and Vyugin~\cite{ver-vyu:j:kolm}, we note that obtaining a source with randomness rate $1$ from a source with randomness rate, say, $0.99$ is not possible even if the extractor has access to a constant amount of non-uniform information. In contrast, we show that an $\omega(1)$ amount of non-uniform information is sufficient for this task.

We continue with a more detailed discussion of the two types of results and of the technical method that we use.
\smallskip

Formally, given the parameter $k \leq n$, one would like to have a function $E : \zo^n \mapping \zo^m$ such that, whenever $C(x) \geq k$, it holds that $C(E(x)) \approx m$ . It is well known that no such computable function $E$ exists for non-trivial parameters. Indeed for any given $E$, consider the string $y \in \zo^m$ with the largest number of preimages. Then $C(y \mid n) = O(1)$ and among its at least $2^{n-m}$ preimages there must be some $x$ with $C(x) \geq n-m$. In other words, for any given $E$, there are some strings (such as the above $x$) on which $E$ fails. Thus, in order for a function $E$ to extract randomness from any source $x$ with randomness $\geq k$, $E$ must have some additional information $\alpha_x$, which we call \emph{advice about the source.} The question is how much such advice information should be provided.

Fortnow et al.~\cite{fhpvw:c:extractKol} have shown that a constant number of advice bits are sufficient if one settles to extracting from strings with linear randomness a string whose randomness rate is $1-\epsilon$.~\footnote{The randomness rate of an $n$-bit string $x$ is $C(x)/n$.} More precisely, they show that for any positive rational numbers $\sigma$ and $\epsilon$, there exists a polynomial-time computable function $E$ and a constant $h$ such that for any $x \in \zo^n$ with $C(x) \geq \sigma n$, there exists a string $\alpha_x$ of length $h$ such that $C(E(x, \alpha_x)) \geq (1-\epsilon)m$ and $m \geq cn$, for some constant $c$ that depends on $\sigma$ and $\epsilon$. Note that this result implies that it is possible to construct in polynomial time a list with $2^h$ strings and one of them is guaranteed to have Kolmogorov complexity at least $(1-\epsilon)m$. 

The shortcoming of Fortnow et al's result is that the randomness rate of the output is not $1$. It would be desirable that $C(E(x, \alpha_x)) \geq m - o(m)$. We first remark that, as a  consequence of a result of Vereshchagin and Vyugin~\cite{ver-vyu:j:kolm}, randomness rate $1$ cannot be obtained with a constant number of bits of advice about the input. Indeed, we show the following.
\begin{fact}
If a computable function $E: \zo^n \times \zo^h \mapping \zo^m$ has the property that for all strings $x$ with $C(x) \geq \sigma n$, it holds that there exists $\alpha_x$ such that $C(E(x, \alpha_x)) \geq (1-\epsilon)m$, then $\epsilon \geq \frac{1-\sigma}{2^{h+1}-1} - o(1)$ (provided that $m = \omega(\log n +h)$).
\end{fact}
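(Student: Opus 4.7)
The plan is to reduce the statement, via a direct counting argument, to the Vereshchagin--V'yugin theorem~\cite{ver-vyu:j:kolm}. The starting point is to rewrite the hypothesis as a covering statement: letting $B = \{y \in \zo^m : C(y) \ge (1-\epsilon) m\}$ and $V_\alpha = \{x \in \zo^n : E(x,\alpha) \in B\}$, the existence of some $\alpha_x$ for each $x$ with $C(x) \ge \sigma n$ is equivalent to $\bigcup_{\alpha \in \zo^h} V_\alpha \supseteq \{x : C(x) \ge \sigma n\}$, a set of cardinality at least $2^n - 2^{\sigma n + O(1)}$.

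Next I would invoke the Vereshchagin--V'yugin theorem to upper bound the size of the covered set. Their theorem is tailor-made for the scenario where a computable function together with short advice is used to hit ``complex'' targets; crucially, $B$ is co-r.e., which is exactly the effectiveness hypothesis V--V require. Applying their bound to each $V_\alpha$ and summing, the denominator $2^{h+1}-1$ arises as the geometric sum $1 + 2 + 4 + \cdots + 2^h$: the natural decomposition is into nested layers indexed by how many of the $2^h$ advice strings succeed on a given input, and each additional layer costs a factor of two in the allowable complexity deficit $\epsilon m$. Comparing the resulting upper bound on $\bigl|\bigcup_\alpha V_\alpha\bigr|$ with the required lower bound $2^n - 2^{\sigma n + O(1)}$ and solving for $\epsilon$ yields the claimed inequality. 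The condition $m = \omega(\log n + h)$ ensures that the $O(\log n + \log h)$ overheads of self-delimiting encodings get absorbed into the $-o(1)$ term.

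The main obstacle is invoking V--V correctly: one must verify that the $V_\alpha$'s satisfy their effectiveness hypothesis (in particular, the set $B^c$ is r.e.\ uniformly in $m,\epsilon$, so $V_\alpha^c$ is too), and that the bookkeeping of the $2^{h+1}-1$ factor comes out right after optimising over the possible ``success patterns'' for the $2^h$ advice strings. A more elementary route, avoiding V--V, is to combine the trivial inequality $C(E(x, \alpha)) \le C(x) + h + O(\log n)$ with a pigeonhole on the $2^h$ possible advice strings, using a threshold $x$ with $C(x)$ just above $\sigma n$; this recovers a bound of the right flavour but weaker than $(1-\sigma)/(2^{h+1}-1)$ in general, so the reduction to V--V is what is needed for the stated constant.
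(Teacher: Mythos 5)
There is a genuine gap. Your covering reformulation does not match the Vereshchagin--Vyugin theorem as it is actually stated and used in this paper (Theorem~\ref{t:vervyu}): that theorem asserts the \emph{existence of a single hard source} $x$ with $C(x) > n - Hm$ such that every $z\in\zo^m$ with $C(z\mid x)\le h$ has $C(z) < h + \log n + \log m + O(\log\log n, \log\log m)$. It does not give an upper bound on the measure of a covered set, and there is no visible route from the stated theorem to the bound on $|\bigcup_\alpha V_\alpha|$ that your plan requires. Your narrative about $2^{h+1}-1$ arising from ``nested layers of how many advice strings succeed'' is not how the factor $H$ appears here; it is simply imported wholesale as the constant in the complexity lower bound $C(x) > n - Hm$ of the V--V theorem, not built up by a layered counting on your side of the argument.

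The second, and more substantive, missing idea is the truncation. The paper does not apply V--V at output length $m$; it first sets $m' = \min\bigl(\lfloor \frac{1-\sigma}{H} n\rfloor,\, m\bigr)$, invokes V--V at length $m'$ (so that the guaranteed $x$ satisfies $C(x) > n - Hm' \ge \sigma n$, making it a valid source for $E$), and then looks at the $m'$-bit \emph{prefix} $z$ of $E(x,\alpha_x)$. This prefix still has $C(z\mid x)\le h+O(1)$, so V--V caps $C(z)$ by roughly $h+\log n$, while the extraction hypothesis forces $C(z)\ge (1-\epsilon)m - (m-m') - O(\log m)$. Comparing gives $\epsilon \ge \frac{1-\sigma}{H} - o(1)$, with the $o(1)$ absorbing exactly the $O(\log n + h)/m$ terms, which is where the hypothesis $m = \omega(\log n + h)$ enters. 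Your proposal never introduces $m'$ or the prefix step, and without it the V--V bound cannot be brought to bear when $m$ is small relative to $n$. Your fallback ``elementary route'' via $C(E(x,\alpha))\le C(x)+h+O(\log n)$ has the same defect: with $C(x)\approx \sigma n$ it gives no contradiction at all when $m$ is sublinear, which is precisely the regime the theorem must cover; the truncation argument is what rescues this.
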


In contrast with the above impossibility result, we show that from sources with a linear amount of randomness, one can extract a string with randomness rate $1$ with basically any non-constant amount of advice, such as, for example, the inverse of the Ackerman function. This is an instantiation of the following more general result.
\smallskip

\begin{fact}
\label{t:boundsinformal}
(Informal statement;  see Theorem~\ref{t:extractsmalladvice} for full statement.)
For any $m = m(n)$ computable from $n$ there exists a computable function $E$ with the following property:

For every $n$-bit string $x$ with complexity $\geq m$, there exists a string $\alpha_x$ of length $\omega( \log \frac{n}{m})$ such that $C(E(x,\alpha_x)) = m - o(m)$ and the length of $E(x, \alpha_x)$ is $m$.
\end{fact}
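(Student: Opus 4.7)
My plan is to establish Fact~\ref{t:boundsinformal} by a two-stage argument: a probabilistic counting step showing that a suitable function exists for each length $n$, followed by a derandomization making this function uniformly computable.

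For the parameters, I would fix $d = d(n) = o(m)$ with $d \to \infty$ slowly, and set the advice length to be the smallest integer $a = a(n)$ with $2^a \geq n/d$. Since $a = \log(n/d) + O(1) = \log(n/m) + \log(m/d) + O(1)$ and $\log(m/d) \to \infty$, this choice automatically gives $a = \omega(\log(n/m))$, as required.

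For the probabilistic existence step, set $B = \{y \in \zo^m : C(y) \leq m - d\}$, so $|B| < 2^{m-d}$. For a uniformly random $E : \zo^n \times \zo^a \to \zo^m$ and a fixed $x \in \zo^n$, the probability that $E(x,\alpha) \in B$ for every $\alpha$ is at most $(|B|/2^m)^{2^a} \leq 2^{-d \cdot 2^a}$. A union bound over the $2^n$ possible $x$ gives total failure probability below $2^{n - d \cdot 2^a} < 1$ by the choice of $d$ and $a$, producing a function $E_n$ such that every $x$ of length $n$ with $C(x) \geq m$ has some advice $\alpha_x$ with $C(E_n(x, \alpha_x)) \geq m - d = m - o(m)$.

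The main obstacle is the derandomization: the above is non-constructive, whereas the theorem requires a single computable $E$. I would address this by replacing $B$ by its time-bounded proxy $B^t = \{y : C^t(y) \leq m - d\}$, which is decidable once $t$ is given; the counting argument goes through unchanged since $|B^t| \leq |B|$. For a computable time bound $t(n)$ chosen to grow fast enough to certify the needed complexity bounds at length $n$, the lex-first $E_n$ that avoids $B^{t(n)}$ on every $x$ with $C^{t(n)}(x) \geq m$ can be found by finite search and is uniformly computable. The most delicate point is arguing that this computable $E_n$ still satisfies the untruncated property, which requires carefully matching $t(n)$ to the relevant programs' halting times; an alternative route, more typical in the Kolmogorov extractor literature, is to bypass derandomization entirely by an explicit combinatorial construction (e.g.\ a list-decodable code or a suitable small-bias hash family) that directly supplies $E$ with the required list-avoidance property.
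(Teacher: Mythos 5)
Your probabilistic counting step is a valid and in fact simpler existence argument than the paper's route through balanced tables, and your calibration of the advice length $a$ is correct. The gap is in the derandomization. The set $B = \{y \in \{0,1\}^m : C(y) \leq m-d\}$ is c.e.\ but not decidable, so ``the lex-first $E$ that avoids $B$'' is not something one can compute. Your replacement $B^t = \{y : C^t(y) \leq m-d\}$ fails in the \emph{wrong direction}: since $C^t(y) \geq C(y)$ for every time bound $t$, we have $B^t \subseteq B$, so avoiding $B^t$ is a strictly \emph{weaker} requirement than avoiding $B$. The lex-first $E_n$ chosen to avoid $B^{t(n)}$ may well send $(x,\alpha_x)$ into $B \setminus B^{t(n)}$ --- a string with a short description that happens to halt only after $t(n)$ steps --- and then $C(E_n(x,\alpha_x))$ is small even though $C^{t(n)}(E_n(x,\alpha_x))$ is large. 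There is also no computable $t(n)$ that ``certifies the needed complexity bounds'': that would require $t(n)$ to dominate the halting times of all programs of length at most $m-d$, a busy-beaver-type quantity, and any attempt to restrict attention to programs producing outputs of $E_n$ is circular because $E_n$ itself depends on $t(n)$. So the step you flag as ``the most delicate point'' is in fact impassable along this route.

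The paper never tries to avoid $B$ directly. It isolates a purely combinatorial, hence decidable, property of $E$ --- the $(K,D,\Delta)$-balance condition --- whose abundance is established by the probabilistic method (Lemma~\ref{l:probtable}); a balanced table can therefore be found by exhaustive search, giving $C(E \mid n) = O(1)$. The advice string $\alpha_x$ is then taken to be \emph{any} string incompressible relative to $x$, i.e.\ with $C(\alpha_x \mid x) \geq h(n)$, and Lemma~\ref{l:tableext} derives the complexity bound \emph{indirectly}: if $E(x,\alpha_x)$ had complexity $\leq m-d$, then because $E$ is balanced and $x$ is a ``good row'' (which follows from $C(x \mid n)$ being high), the cell $(x,\alpha_x)$ would be among few $A$-cells in $x$'s row, and hence $\alpha_x$ would be compressible given $x$ --- a contradiction. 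Your alternative suggestion of an explicit combinatorial object is in the right spirit, but as phrased (``list-avoidance'') it carries the same difficulty: the list one wants to avoid is $B$, which is not computable. The repair is exactly the indirection above --- replace ``avoid $B$'' by a decidable dispersion/balance property of $E$, and choose $\alpha_x$ as an incompressible-given-$x$ seed rather than as an $x$-specific escape from a non-computable exceptional set.
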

\smallskip

Note that the function $E$ from Fact~\ref{t:boundsinformal} is computable, but no complexity bound is claimed for it.  We can obtain an extractor $E$ computable by a polynomial-size circuit with almost all the properties from 
Theorem~\ref{t:boundsinformal}. Basically the weakening is that the output length $m$ has to be $\leq  (\log n)^c$, for some positive constant $c$. Moreover the polynomial-size circuit is itself computable, in the sense that there exists an algorithm that on input $n$ outputs the description of the circuit that computes $E$.  We call such a circuit an \emph{effectively constructible  circuit}. (Note: Without the restriction that $E$ is computable, the result would be trivial because a polynomial-size circuit computing $E$ could simply have a random string hardwired into it.)
\begin{fact}
\label{t:boundspolysizeinformal}
(Informal statement;  see Theorem~\ref{t:derandextract} for full statement.)
For every constant $c$ and any $m = m(n)$ computable in polynomial time from $n$ such that $m \leq (\log n)^c$ there exists a function $E$, which is computable by a polynomial-size effectively constructible circuit with the following property:

For every $n$-bit string $x$ with complexity $\geq m$, there exists a string $\alpha_x$ of length $\omega( \log \frac{n}{m})$ such that $C(E(x,\alpha_x)) = m - o(m)$ and the length of $E(x, \alpha_x)$ is $m$.\footnote{In the conference version~\cite{zim:c:extadvice}, it was claimed that this result holds for $m(n) \leq cn$, for some fixed constant $c$. The proof contained an error, and we can only prove the result for $m(n) \leq \log^c(n)$.}
\end{fact}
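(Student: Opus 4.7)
The plan is to derive this effective version from the probabilistic/enumerative construction underlying Theorem~\ref{t:extractsmalladvice} by instantiating it with an explicit, polynomially-time constructible min-entropy extractor, and then running the classical extractor-to-Kolmogorov-extractor counting argument. The effective constructibility of the final circuit will come for free from the uniformity of the extractor construction, and the restriction $m\leq(\log n)^{c}$ will enter only when we invoke the explicit extractor in the parameter regime where its circuit really is of polynomial size.

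Concretely, I would fix the error $\epsilon := 1/4$, the entropy deficit $s := 2$ (note $m-s = m-o(m)$ whenever $m\to\infty$), and the source-entropy threshold $k := m - C\log n$ for a constant $C$ chosen large enough that the Kolmogorov accounting below yields a contradiction. I would then take an explicit $(k,\epsilon)$-min-entropy extractor $\mathrm{Ext}_n : \zo^n \times \zo^{a'} \to \zo^m$ with near-optimal entropy loss $(k+a')-m = O(\log n)$, seed length $a' = a'(n)$, and a uniform polynomial-time algorithm that on input $n$ outputs a polynomial-size circuit for it (for instance, a Guruswami--Umans--Vadhan-style construction tuned for near-optimal loss). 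To match the advice length $a = a(n) = \omega(\log(n/m))$ required by the theorem, I would pad the seed by simply ignoring the extra bits of $\alpha$: $E(x,\alpha) := \mathrm{Ext}_n(x, \alpha[1..a'])$. Because the whole construction is uniform in $n$, the circuit for $E$ admits an $O(\log n)$-bit description (algorithm plus $n$).

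Next I would execute the reduction. Suppose for contradiction that some $x\in\zo^n$ with $C(x)\geq m$ satisfies $C(E(x,\alpha)) < m-s$ for every $\alpha\in\zo^a$. Let $A := \{y\in\zo^m : C(y) < m-s\}$, which is recursively enumerable and has $|A|\leq 2^{m-s}$, and let $B := \{x'\in\zo^n : E(x',\alpha)\in A \text{ for all }\alpha\}$. If $|B|\geq 2^k$, the uniform distribution on $B$ is a $k$-source, so by the extractor guarantee $E(U_B, U_{a'})$ is $\epsilon$-close to uniform on $\zo^m$; yet this distribution is entirely supported on $A$, forcing $1 \leq |A|/2^m + \epsilon \leq 2^{-s} + \epsilon = 1/4 + 1/4 < 1$, a contradiction. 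Hence $|B| < 2^k$. Since $A$ is r.e.\ (via a dovetailed enumeration of halting short programs) and $E$ is uniformly constructible, $B$ is also r.e., so enumerating $B$ and specifying $x$ by its index gives $C(x) \leq \log|B| + O(\log n) \leq k + O(\log n) < m$ for $C$ chosen large enough, contradicting $C(x)\geq m$. Therefore some $\alpha_x$ of length $a = \omega(\log(n/m))$ satisfies $C(E(x,\alpha_x)) \geq m-s = m - o(m)$, and $|E(x,\alpha_x)| = m$ by construction.

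The main obstacle is exhibiting an explicit $(k,\epsilon)$-extractor that simultaneously has (i)~near-optimal entropy loss $O(\log n)$, (ii)~polynomial circuit size, and (iii)~a uniform construction fitting into the $O(\log n)$-bit description budget of the Kolmogorov counting step. Standard near-optimal-loss constructions deliver all three when the output length $m$ stays polylogarithmic in $n$, because the seed length, the algebraic bookkeeping, and the circuit size all remain polynomial in that regime. Pushing the construction to $m\leq cn$, as attempted in the conference version~\cite{zim:c:extadvice}, requires preserving near-optimal entropy loss at linear output length while keeping the circuit polynomial and uniformly constructible, and it is precisely this step in the derandomization that broke down; restricting to $m\leq(\log n)^c$ sidesteps this difficulty and is the reason the corrected statement is confined to polylogarithmic output length.
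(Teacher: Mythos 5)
Your reduction step (define $A$ as the strings with deficiency $\geq s$, define $B$ as the inputs whose entire seed fiber lands in $A$, use the extractor property to cap $|B|$, then enumerate $B$ to compress $x$) is correct and is the standard extractor-to-Kolmogorov-extractor argument. But this is a genuinely different route from the paper's proof: the paper does not invoke any pre-existing explicit extractor. Instead it (a) defines a purely combinatorial \emph{balanced table}, (b) shows by a probabilistic argument that balanced tables are abundant, (c) shows via Ajtai's approximate counting that balanced-ness is checkable by a constant-depth circuit of size $\poly(N^{K})$ with $K=2^{m}$, and (d) derandomizes via the Nisan--Wigderson generator against bounded-size constant-depth circuits. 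The restriction $m\leq(\log n)^{c}$ in the paper arises precisely because the NW generator (seed $\poly(n)$) can only fool constant-depth circuits up to size $2^{\poly(\log \tilde N)}=2^{\poly(n)}$, while the Ajtai-based test circuit has size $2^{\Theta(n\cdot 2^{m})}$ --- a constraint that has no analogue in your extractor-based route.

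The genuine gap in your proposal is the ingredient you treat as ``standard'': an explicit $(k,1/4)$-extractor $\mathrm{Ext}_n:\zo^n\times\zo^{a'}\to\zo^m$ with $k=m-O(\log n)$, seed $a'=O(\log n)$, uniform polynomial-size circuits, and entropy loss $O(\log n)$, in the regime $k\leq(\log n)^{c}$. This is not an off-the-shelf object. Note that your advice budget $h(n)=\omega(\log(n/m))$ can be taken to grow arbitrarily slowly above $\log n$ (e.g., $h=\lceil\log n\cdot\log\log\log n\rceil$), so you cannot afford any seed length that is $\mathrm{polylog}(n)$ rather than $O(\log n)$. GUV-type condensers have seed $O(\log n)$ but then extracting nearly all the min-entropy from the condensed $O(k)$-bit source by something like the Leftover Hash Lemma costs seed $\Theta(k)=\Theta(\mathrm{polylog}\,n)$; and the known ``extract all but small loss'' constructions (RRV-style) pay seed $\Omega(\log^2 n)$. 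So this step, which is the entire technical content of your proof, is not justified; neither is your closing claim that requiring $m\leq(\log n)^{c}$ is ``precisely'' the obstacle to near-optimal loss at linear output length. If your proposed route did go through, it would not naturally need the polylog cap at all --- which, given the paper's explicit footnote correcting an earlier overreach, should itself be taken as a warning sign that the extractor you need is not simply available.

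A correct proof along the paper's lines instead shows that a \emph{random} table is with high probability a $(K,D,\Delta)$-balanced table, that balanced-ness can be checked (approximately) by a constant-depth circuit of size $\poly(N^{K})$ using Ajtai's approximate counting, and then replaces the random input to that test by an NW-generator output of polynomial-length seed; the parameter constraint on $m$ is exactly what makes the Ajtai test circuit small enough to be fooled.
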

\medskip

{\bf Discussion of technical aspects.}
We present the main ideas in the proofs of Fact~\ref{t:boundsinformal}, and Fact~\ref{t:boundspolysizeinformal}. A Kolmogorov extractor is a computable ensemble of functions $E: \zo^{n_1} \times \zo^{n_2} \mapping \zo^m$ such that for all $x \in \zo^{n_1}$ and $y \in \zo^{n_2}$ that have Kolmogorov complexity above a certain threshold value and that are sufficiently independent (which roughly means that $C(y \mid x) \approx C(y)$), it holds that $C(E(x,y)) \approx m$.

To obtain the extractors $E$ that require a small amount of non-uniform information about the source (see Fact~\ref{t:boundsinformal} and Fact~\ref{t:boundspolysizeinformal}), we need to show that for each $x$, it is enough to have a short string $\alpha_x$ such that $E(x, \alpha_x)$ has randomness rate $1$ and contains almost all the randomness of $x$. The solution is based on the fact that from $x$ and a short string that is random even conditioned by $x$, one can extract almost all the randomness of $x$. This is similar to the well-studied case of seeded extractors in computational complexity, with the remark that we can have shorter seeds because requiring that the output has randomness rate equal to $1$ is weaker than requiring that the output is statistically close to the uniform distribution (as stipulated in the definition of seeded extractors). Then we take $\alpha_x$ to be such a short seed. The above fact is obtained via an elementary use of the probabilistic method. We first identify a combinatorial object, called a \emph{balanced table}, that characterizes a Kolmogorov extractor, in the sense that the table of a Kolmogorov extractor must satisfy the combinatorial constraints of a balanced table. We show (with the probabilistic method) that such an object exists  with a seed of length $\omega(\log (n/m))$, where $n$ is the length of $x$ and $m$ is the Kolmogorov complexity of $x$. This establishes Fact~\ref{t:boundsinformal}. Since the function $E$ from Fact~\ref{t:boundsinformal} is obtained via the probabilistic method, we cannot claim any complexity bound for it.  To obtain the Kolmogorov extractor in Fact~\ref{t:boundspolysizeinformal}, which is computed by polynomial-size circuits,  we derandomize the construction from Fact~\ref{t:boundsinformal}, using a method of Musatov~\cite{mus:t:spacekolm}. The key observation is that the combinatorial constraints of a balanced table can be checked by constant-depth circuits of relatively small size. The argument goes as follows: (a) these constraints require that in all sufficiently large rectangles of the table no element appears too many times; (b) thus one needs to count the occurrence of each element in every sufficiently large rectangle of the table; (c) by a well-known result of Ajtai~\cite{ajt:j:constantdepthcount},  this operation can be done with sufficient accuracy by constant-depth circuits with relatively small size. Therefore, we can use the Nisan-Wigderson~(\cite{nis-wig:j:hard}) pseudo-random generator NW-gen that fools bounded-size constant-depth circuits and has seeds of size polylogaritmic in the size of the output.  Since balanced tables with the required parameters are abundant, we infer that there exists a seed $s$ so that NW-gen$(s)$ is a balanced table with the required parameters. A balanced table is an object of size exponential in $n$, which implies that the seed $s$ has size polynomial in $n$. Moreover, the Nisan-Wigderson pseudo-random generator has the property that each bit of the output can be calculated separately in time polynomial in the length of the seed. This implies that the Kolmogorov extractor whose table is NW-gen$(s)$ can be computed by a polynomial-sized circuit that has $s$ hard-wired in its circuitry.

\section{Preliminaries}
\subsection{Notation and basic facts on Kolmogorov complexity}
The Kolmogorov complexity of a string $x$ is the length of the shortest effective description of $x$. There are several versions of this notion. We use here  the \emph{plain complexity}, denoted $C(x)$, and also the \emph{conditional plain complexity} of a string $x$ given a string $y$, denoted $C(x \mid y)$, which is the length of the shortest effective description of $x$ given $y$. The formal definitions are as follows.
We work over the binary alphabet $\zo$. A string is an element of $\{0,1\}^*$.
If $x$ is a string, $|x|$ denotes its length.  
Let $M$ be a Turing machine that takes two input strings and outputs one string. For any strings $x$ and $y$, define the \emph{Kolmogorov complexity} of $x$ conditioned by $y$ with respect to $M$, as 
$C_M(x \mid y) = \min \{ |p| \mid M(p,y) = x \}$.
There is a universal Turing machine $U$ with the following property: For every machine $M$ there is a constant $c_M$ such that for all $x$, $C_U(x \mid y) \leq C_M(x \mid y) + c_M$.
We fix such a universal machine $U$ and dropping the subscript, we write $C(x \mid y)$ instead of $C_U(x \mid y)$. We also write $C(x)$ instead of $C(x \mid \lambda)$ (where $\lambda$ is the empty string). The \emph{randomness rate} of a string $x$ is defined as ${\rm rate}(x) = \frac{C(x)}{|x|}$.  If $n$ is a natural number, $C(n)$ denotes the Kolmogorov complexity of the binary representation of $n$. For two $n$-bit strings $x$ and $y$, the information in $x$ about $y$ is denoted $I(x : y)$ and is defined as $I(x : y) = C(y \mid n) - C(y \mid x)$.

In this paper, the constant hidden in the $O(\cdot)$ notation only depends on the universal Turing machine.

For all $n$ and $k \leq n$, 

$2^{k-O(1)} < |\{x \in \zo^n \mid C(x\mid~n) < k\}| < 2^k$.


Strings $x_1, x_2, \ldots, x_k$ can be encoded in a self-delimiting way (\ie, an encoding from which each string can be retrieved) using $|x_1| + |x_2| + \ldots + |x_k| + 2 \log |x_2| + \ldots + 2 \log |x_k| + O(k)$ bits. For example, $x_1$ and $x_2$ can be encoded as $\overline{(bin (|x_2|)} 01 x_1 x_2$, where $bin(n)$ is the binary encoding of the natural number $n$ and, for a string $u = u_1 \ldots u_m$, $\overline{u}$ is the string $u_1 u_1 \ldots u_m u_m$ (\ie, the string $u$ with its bits doubled).

All the Kolmogorov extractors in this paper are ensembles of functions $f = (f_n)_{n \in \nat}$ of type $f_n : \zo^n \times \zo^{k(n)} \mapping \zo^{m(n)}$. For readability, we usually drop the subscript and the expression ``ensemble $f: \zo^n \times \zo^k \mapping \zo^m$'' is a substitute for
``ensemble $f = (f_n)_{n\in \nat}$, where for every $n$, $f_n : \zo^n \times \zo^{k(n)} \mapping \zo^{m(n)}$.''

For any $n \in \nat$, $[n]$ denotes the set $\{1,2, \ldots, n\}$.

\subsection{Approximate counting via polynomial-size constant-depth circuits}
In the derandomization argument used in the proof of Theorem~\ref{t:derandextract}, we need to count with constant-depth polynomial-size circuits. Ajtai~\cite{ajt:j:constantdepthcount} has shown that this can be done with sufficient precision.

\begin{theorem}
\label{t:ajtai}
(Ajtai's approximate counting with polynomial size constant-depth circuits.)
There exists a uniform family of circuits $\{G_n\}_{n \in \nat}$, of polynomial size and constant depth, such that for every $n$, for every $x \in \zo^n$, for every $a \in \{0, \ldots, n-1\}$, and for every $\epsilon > 0$,
\begin{itemize}
	\item If the number of $1$'s in $x$ is $\leq (1 - \epsilon)a$, then $G_n(x,a,1/\epsilon) = 1$,
	\item If the number of $1$'s in $x$ is $\geq (1 + \epsilon)a$, then $G_n(x,a,1/\epsilon) = 0$.
\end{itemize}
\end{theorem}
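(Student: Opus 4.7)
The plan is to prove the theorem in two parts: (i) reduce the general $(a,\epsilon)$-threshold counting problem to \emph{approximate majority} --- the promise problem of distinguishing Hamming weight $\leq n/3$ from weight $\geq 2n/3$; (ii) construct a constant-depth polynomial-size circuit for approximate majority. Part (ii) is the technical heart of the argument.

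For (i) I would use random sub-sampling. Given $x\in\zo^n$, a threshold $a$, and an error $\epsilon$, I hard-wire into $G_n$ a family $S_1,\ldots,S_M$ of subsets of $[n]$ with $M=\poly(n,1/\epsilon)$, each $S_i$ formed by including every index independently with probability $1/a$. One layer of OR gates produces $b_i=\bigvee_{j\in S_i}x_j$. Writing $w$ for the Hamming weight of $x$, one has $\Pr[b_i=1]=1-(1-1/a)^w \approx 1-e^{-w/a}$, so the two promise regimes $w\leq(1-\epsilon)a$ and $w\geq(1+\epsilon)a$ give biases separated by $\Omega(\epsilon)$. A Chernoff bound plus a union bound over all $2^n$ possible inputs $x$ produces a single fixed choice of subsets for which the empirical fraction of $1$'s among $(b_1,\ldots,b_M)$ falls below $1/3$ in one regime and above $2/3$ in the other (after a mild linear rescaling via hard-wired padding bits). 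The dependence on the inputs $a$ and $1/\epsilon$ is handled by precomputing $\poly(n)$ such families, one per relevant $(a,\epsilon)$ pair, and routing the correct family using a constant-depth selector driven by $a$ and $1/\epsilon$.

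For (ii) I would use the probabilistic amplification scheme at the core of Ajtai's construction: a circuit of $d=O(1)$ alternating AND/OR layers in which each node of a new layer computes a small Boolean formula on a bundle of randomly chosen nodes from the previous layer, the bundles being chosen so that the gap between the fraction of $1$'s in the two promise cases is pushed toward $\{0,1\}$ as layers are ascended, until a width-$O(1)$ AND/OR decision at the top determines the answer. Existence of a good fixed assignment of bundles is again shown by the probabilistic method with a union bound over $2^n$ inputs. Uniformity follows because this construction can be derandomized by brute-force search over the polynomial-bit space of bundle assignments, producing $G_n$ as a computable function of $n$.

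The main obstacle is (ii): achieving genuinely constant depth despite the closeness of approximate majority to exact majority, which is known by Furst--Saxe--Sipser and H{\aa}stad to require super-polynomial size at any constant depth. The reconciliation rests delicately on the promise gap combined with sharp concentration, which together allow each amplification layer to contract the gap polynomially in width without blowing up depth. Once approximate majority is available in $\acz$, the reduction in (i) and the selection across $(a,1/\epsilon)$ inputs are routine.
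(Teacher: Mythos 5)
First, a caveat on the comparison: the paper does not prove Theorem~\ref{t:ajtai} at all --- it imports it as a black box from Ajtai's paper, and immediately remarks that it needs neither the uniformity nor the full accuracy (in the application only $\epsilon=0.01$ is used), for which a much simpler non-uniform depth-$3$ approximate-majority construction (Viola) suffices. So your proposal is not competing with an in-paper argument; it is an attempt to reconstruct the cited result. The architecture you chose --- (i) reduce $(a,\epsilon)$-threshold counting to a constant-bias promise problem by OR-ing over random subsets sampled at rate $1/a$, then (ii) solve approximate majority in $\acz$ by probabilistic amplification of the bias with a union bound over the $2^n$ inputs, derandomized by brute-force search to get computable uniformity --- is indeed the standard route, and your part (ii) is essentially the Ajtai/Viola construction.

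There is, however, one step that fails as written. After stage (i) you hold a string $(b_1,\dots,b_M)$ whose fraction of ones is $\le c_0$ in one promise regime and $\ge c_1$ in the other, where $c_0,c_1$ are both close to $1-e^{-1}$ and $c_1-c_0=\Theta(\epsilon)$. You propose to move this to the $[1/3,2/3]$ promise of approximate majority by ``a mild linear rescaling via hard-wired padding bits.'' Padding with constants realizes only affine maps $p\mapsto \alpha p+\beta$ with $\alpha\le 1$; such maps can recenter the interval $[c_0,c_1]$ but can never widen the gap $c_1-c_0$, so whenever $c_1-c_0<1/3$ the reduction to the $(1/3,2/3)$ promise does not go through. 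The fix is either to insert an explicit amplification stage (small ANDs/ORs over random bundles of the $b_i$, exactly the mechanism of your part (ii)) to polarize the bias before the majority gadget, or to invoke an approximate-majority circuit tuned to the gap $(c_0,c_1)$ itself, which the amplification argument does provide for any constant gap at size and depth depending on that constant. Relatedly, the theorem as stated --- a single polynomial-size family correct for \emph{every} $\epsilon>0$, with $1/\epsilon$ supplied as an input --- cannot be taken literally: for $\epsilon<1/(2n)$ it would compute exact thresholds, hence majority, in polynomial-size constant-depth circuits, contradicting the H{\aa}stad lower bound you yourself invoke. The statement must be read with $\epsilon$ a fixed constant (or with the size allowed to depend on $1/\epsilon$), and that constant-$\epsilon$ version is all that Lemma~\ref{l:countcircuit} uses.
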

 We do not need the full strength (namely, the uniformity of $G_n$) of this theorem; the required level of accuracy (just $\epsilon > 0$) can be achieved by non-uniform polynomial-size circuits of depth $d=3$ (with a much easier proof, see~\cite{vio:c:approxcount}).

\subsection{Pseudo-random generator fooling bounded-size constant-depth circuits}
The derandomization in the proof of Theorem~\ref{t:derandextract} is done using the Nisan-Wigderson pseudo-random generator that ``fools" constant-depth circuits~\cite{nis-wig:j:hard}. Typically, it is required that the circuit to be fooled has polynomial size, but the proof works for  circuits of size $2^{\log^c n}$ (where $c$ is any fixed constant).
\begin{theorem}
\label{t:NWgen}
(Nisan-Wigderson pseudo random generator.)
For every constants $c$ and $d$ there exist a constant $c'$ with the following property. There exists a function $\mbox{NW-gen}:\zo^{O(\log^{c'}n)} \mapping \zo^n$ such that for any circuit $G$ of size $2^{(\log n)^c}$ and depth $d$,
\[
 | \prob_{s \in \zo^{O(\log^{c'}n)}}[G(\mbox{NW-gen}(s)) = 1] - \prob_{z \in \zo^n} [G(z)=1]| < 1/100.
\]
Moreover, there is a procedure that on inputs $(n, i,s)$ produces the $i$-th bit of $\mbox{NW-gen}(s)$ in time
$\poly(\log n)$.

\end{theorem}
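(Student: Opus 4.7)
The plan is to follow the standard Nisan--Wigderson paradigm: take a Boolean function that is average-case hard against small constant-depth circuits and stretch its seed using a combinatorial design. For the hard function I would take $\mbox{PARITY}_\ell: \zo^\ell \mapping \zo$. By H{\aa}stad's switching lemma, any depth-$d'$ circuit of size $S$ that agrees with $\mbox{PARITY}_\ell$ on more than a $\tfrac{1}{2} + \delta$ fraction of inputs must satisfy $S \geq 2^{\Omega(\ell^{1/(d'-1)})}$, provided $\delta \geq 2^{-\ell^{1/(d'-1)}}$. This is the sole ``hardness'' ingredient.

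Next I would invoke a Nisan--Wigderson combinatorial design: for parameters $\ell$ and $n$ there exists a family of sets $S_1,\dots,S_n \subseteq [m]$ with $|S_i| = \ell$, $|S_i \cap S_j| \leq \log n$ for $i \neq j$, and $m = O(\ell^2/\log n)$, constructible in time $\poly(n,m)$. I would then define
\[
\mbox{NW-gen}(s)_i \;=\; \mbox{PARITY}_\ell(s|_{S_i}), \qquad s \in \zo^m.
\]
Each output bit depends on $\ell$ explicit seed coordinates and is computable in $\poly(\ell)$ time, which (once $\ell = \poly(\log n)$) delivers the ``moreover'' clause.

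The heart of the argument is the distinguisher-to-predictor-to-reconstruction chain. Assume for contradiction that some depth-$d$ circuit $G$ of size $2^{(\log n)^c}$ satisfies $|\prob_s[G(\mbox{NW-gen}(s))=1]-\prob_z[G(z)=1]| \geq 1/100$. A Yao hybrid argument converts $G$ into an index $i$ and a next-bit predictor $P$ of depth $d+O(1)$ and size $\poly(n)\cdot 2^{(\log n)^c}$ that, given $\mbox{NW-gen}(s)_1,\dots,\mbox{NW-gen}(s)_{i-1}$, guesses $\mbox{NW-gen}(s)_i$ with advantage $\geq 1/(100n)$. Averaging over the bits of $s$ outside $S_i$ gives a fixing $\tau$ preserving this advantage. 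For each $j<i$, the output bit $\mbox{NW-gen}(s)_j$ then depends only on $|S_i \cap S_j|\leq \log n$ unfixed variables of $s|_{S_i}$, hence is computable by a depth-$2$ circuit (a CNF or DNF) of size at most $2^{\log n}=n$. Hard-wiring these depth-$2$ blocks into $P$ yields a circuit of depth $d+O(1)$ and size $n\cdot 2^{(\log n)^c}$ that, on input $s|_{S_i}\in \zo^\ell$, approximates $\mbox{PARITY}_\ell$ with advantage $\geq 1/(100n)$.

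Finally I would pick $\ell=(\log n)^{c'}$ with $c'$ large enough that H{\aa}stad's bound $2^{\Omega(\ell^{1/(d+O(1)-1)})}$ strictly exceeds $n\cdot 2^{(\log n)^c}$ and the required $\delta = 1/(100n)$ is above the H{\aa}stad threshold; a calculation shows $c' = \Theta(c\cdot d)$ suffices, and then the seed length is $m = O(\ell^2/\log n) = O((\log n)^{2c'-1})$, so renaming $c'$ gives the stated $\zo^{O(\log^{c'}n)}$ seed. The main obstacle is the bookkeeping in this last paragraph: tracking the constant depth blow-up through hybridization and hard-wiring, and ensuring the reconstruction circuit size $n\cdot 2^{(\log n)^c}$ remains provably below H{\aa}stad's correlation lower bound for PARITY at depth $d+O(1)$. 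Everything else---the design, the hybrid argument, and the local computability of each output bit---is routine once the parameters are pinned down.
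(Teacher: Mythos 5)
The paper does not actually prove Theorem~\ref{t:NWgen}: it imports it from Nisan--Wigderson, adding only the remark that the standard argument tolerates circuits of size $2^{(\log n)^c}$ rather than just polynomial size. Your proposal is a reconstruction of that standard proof --- PARITY as the average-case hard function for $\acz$, a combinatorial design, and the hybrid/next-bit-predictor reconstruction --- and it is essentially sound; the final bookkeeping you flag does work out, with $\ell=(\log n)^{\Theta(cd)}$ making H{\aa}stad's correlation bound beat both the reconstruction circuit size $n\cdot 2^{(\log n)^c}$ and the required advantage $1/(100n)$, and seed length $O(\ell^2/\log n)$. One point deserves more care than you give it: the ``moreover'' clause asks that the $i$-th output bit be computable from $(n,i,s)$ in time $\poly(\log n)$, and this does not follow from a design that is merely ``constructible in time $\poly(n,m)$'' --- a greedy construction would force you to spend $\poly(n)$ time before you could even identify $S_i$. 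You need a locally computable explicit design, e.g.\ identifying $[m]$ with $\mathbb{F}_q\times\mathbb{F}_q$ for $q\approx\ell$ and taking $S_i$ to be the graph of the $i$-th polynomial of degree at most $\log n/\log q$, so that $S_i$ itself is obtained from $i$ in time $\poly(\ell,\log n)=\poly(\log n)$. With that substitution your argument yields exactly the statement the paper uses, including the local computability that Theorem~\ref{t:derandextract} relies on to get a polynomial-size circuit with the seed hardwired.
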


\section{Upper bound for the randomness rate of single source Kolmogorov extractors with constant nonuniformity}
In this section we show the limitations of what quality of randomness can be extracted with a bounded quantity of advice.
We obtain this as a consequence of a result of Vereshchagin and Vyugin~\cite{ver-vyu:j:kolm}. To state their result, let us fix $n = $ length of the source, $h =$ number of bits of advice that is allowed, and $m = $ the number of extracted bits. Let  $H = 2^{h+1}-1$.
\smallskip

\begin{theorem}[\cite{ver-vyu:j:kolm}]
\label{t:vervyu}
There exists a string $x \in \zo^n$  with $C(x) > n- H \log (2^m+1) \approx n - Hm$ such that any string $z \in \zo^m$ with $C(z \mid x) \leq h$ has complexity $C(z) < h + \log n + \log m + O(\log \log n, \log \log m)$. 
\end{theorem}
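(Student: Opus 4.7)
The plan is to partition $\zo^n$ into cells according to the map $x \mapsto S_x$, where $S_x := \{z \in \zo^m : C(z \mid x) \leq h\}$, and then to locate a single cell that is both large (so it contains an $x^*$ of high Kolmogorov complexity) and combinatorially simple (so that its elements have small unconditional Kolmogorov complexity). Since there are only $H = 2^{h+1}-1$ programs of length at most $h$, we have $|S_x| \leq H$ for every $x$, and the number of possible values of $S_x$ is therefore bounded by the number of subsets of $\zo^m$ of size at most $H$, which is at most $(2^m+1)^H$. A pigeonhole argument then yields a specific $S^* \subseteq \zo^m$, $|S^*|\leq H$, whose cell $[S^*] := \{x \in \zo^n : S_x = S^*\}$ has cardinality $\tau := 2^n / (2^m+1)^H$ or more.

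Having isolated $S^*$, I would take $x^*$ to be a typical element of $[S^*]$, so that $C(x^* \mid S^*) \geq \log |[S^*]| - O(1) \geq n - H\log(2^m+1) - O(1)$; such an $x^*$ exists because fewer than $2^k$ elements of the cell can have conditional complexity below $k$ given $S^*$. Since $C(x^*) \geq C(x^* \mid S^*) - O(1)$, this produces the required lower bound $C(x^*) > n - H\log(2^m+1)$. On the other hand, any $z \in \zo^m$ with $C(z \mid x^*) \leq h$ must lie in $S_{x^*} = S^*$, a set of at most $H$ elements, so $C(z \mid S^*) \leq \log H = h+1$. Combining these via the usual self-delimiting concatenation inequality gives $C(z) \leq C(S^*) + h + O(\log(C(S^*) + h))$, so the whole argument reduces to showing that $C(S^*) \leq \log n + \log m + O(\log\log n + \log\log m)$.

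To bound $C(S^*)$ I would describe it by the rule: given $n$, $m$, and $h$, output the lex-smallest $S \subseteq \zo^m$ of size $\leq H$ whose cell under $x \mapsto S_x$ has cardinality at least $\tau$. Encoding $n$, $m$, and $h$ self-delimitingly costs $\log n + \log m + O(\log\log n + \log\log m)$ bits, using that in the non-trivial regime we have $h = O(\log n)$, since otherwise already $H \log(2^m+1) \geq n$ and the statement is vacuous; the remaining meta-code has length $O(1)$. Plugging this back into the combination of the previous paragraph yields the target bound $C(z) < h + \log n + \log m + O(\log\log n + \log\log m)$.

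The hard part will be making the description of $S^*$ into a genuinely halting program, because $S_x$ is only recursively enumerable from $x$: cell sizes $|[S]|$ are limit-computable rather than computable, and a naive ``largest-cell-so-far'' search may oscillate because the transient counts $|\{x : \sigma_k(x) = S\}|$ from the dovetailed approximation $\sigma_k(x) := \{U(p,x) : |p| \leq h,\ U(p,x) \in \zo^m \text{ halting within } k \text{ steps}\}$ need not be monotone in $k$. My approach will be to rely on the fact that there are only finitely many pairs $(p,x)$ in play, so a finite (but non-computable) $k_0$ exists at which $\sigma_{k_0}(x) = S_x$ for every $x$ and the procedure eventually halts there. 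To keep the output correct despite fluctuations, I would either work with the monotone quantity $|\{x : \sigma_k(x) \supseteq S\}|$ instead of the non-monotone one, or use a slightly larger threshold than $\tau$ to absorb transient overcounts; either device should let me guarantee that the halting output is indeed an $S$ with $|\{x : S_x = S\}| \geq \tau$ without inflating the description length beyond the claimed budget.
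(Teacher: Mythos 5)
The paper does not prove this statement itself; it is imported verbatim from Vereshchagin and Vyugin~\cite{ver-vyu:j:kolm}, so there is no in-paper proof to compare against. Evaluating your argument on its own terms: the high-level plan (partition $\zo^n$ by $x \mapsto S_x$, pigeonhole to find a large cell, take a complexity-maximal $x^*$ in it, and bound $C(z)$ through a short description of the cell) is the right shape, and your budget accounting, including the observation that $h=O(\log n)$ in the nontrivial regime, is fine. You also correctly identify the real obstruction: $S_x$ is only r.e.\ from $x$, so cell sizes $|\{x : S_x = S\}|$ are not computable and the momentary maximizer can oscillate under dovetailing.

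The gap is that neither of your proposed repairs actually closes this, and the concluding claim that ``either device should let me guarantee that the halting output is indeed an $S$ with $|\{x : S_x = S\}| \geq \tau$'' is false for both. The monotone quantity $g(S,k) = |\{x : \sigma_k(x) \supseteq S\}|$ converges to $|T_S|$ where $T_S=\{x : S_x \supseteq S\}$, so a search halting when $g(S,k)\geq\tau$ only certifies $|T_{S^*}|\geq\tau$, not $|\{x : S_x = S^*\}| \geq \tau$. This distinction is fatal: a complexity-maximal $x^*\in T_{S^*}$ may have $S_{x^*}\supsetneq S^*$, and then some $z$ with $C(z\mid x^*)\leq h$ lies in $S_{x^*}\setminus S^*$ and cannot be recovered from $S^*$ by a short index---the $C(z)$ bound collapses. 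The ``larger threshold'' idea fails for the same reason: at any finite stage the spuriously large cell $\{x:\sigma_k(x)=S\}$ consists entirely of $x$'s with $S_x\supseteq S$, so raising the bar does not filter out the supersets. What is missing is a maximality mechanism that pins $S_{x^*}$ to $S^*$: e.g., grow a chain $\emptyset=S_0\subsetneq S_1\subsetneq\cdots$, adding $z$ while $|T_{S_i\cup\{z\}}|\geq |T_{S_i}|/(2^m+1)$; when it stops at $S_\ell$ (and $\ell\leq H$), at least a $1/(2^m+1)$ fraction of $T_{S_\ell}$ has $S_x=S_\ell$ exactly, recovering a cell of size $\geq 2^n/(2^m+1)^H$ with the needed ``equality'' property. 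Even then the stopping test is $\Pi^0_1$, so making the description of $S^*$ into a genuinely halting program of length $\log n+\log m+O(\log\log)$---without inadvertently paying an extra $\approx h$ bits for the chain depth $\ell$---requires further care that your writeup does not supply. That constructivisation is the heart of the Vereshchagin--Vyugin argument, and it is the step your proposal leaves open.
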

\smallskip

The next theorem, a consequence of Theorem~\ref{t:vervyu}, shows that no Kolmogorov extractor for sources with randomness rate $\sigma$ and that uses $h$ bits of advice about the source can output strings with randomness rate larger than $1 - (1-~\sigma)/H$.
\smallskip

\begin{theorem}
\label{t:limits}
Assume that the parameters $m, h, \sigma$ are computable from $n$ and satisfy the following relations: $0 < \sigma < 1, h > 0, 0 < m < n$, $m = \omega (\log n + h)$. 

Let $f: \zo^n \times \zo^h \mapping \zo^m$ be a computable ensemble of functions such that for every $ x \in \zo^n$ with $C(x) \geq \sigma \cdot n$, there exists a string $\alpha_x$ such that $C(f(x , \alpha_x)) \geq (1-\epsilon)\cdot m$. Then $\epsilon \geq \frac{1-\sigma}{H}- o(1)$.
\end{theorem}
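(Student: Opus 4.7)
The strategy is to apply Theorem~\ref{t:vervyu} with a carefully chosen auxiliary output length $\tilde m\leq m$ (not necessarily $m$ itself), so that it produces a source $x$ which simultaneously lies in the extractor's domain of applicability ($C(x)\geq\sigma n$) and has the V-V ``hardness'' property, and then combine with a prefix-truncation of the extractor output.

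Set $m':=\lfloor(1-\sigma)n/H\rfloor-c_0$ for a suitable constant $c_0\geq 1$, small enough that $H\log(2^{m'}+1)\leq(1-\sigma)n$; and let $\tilde m:=\min(m,m')$. Theorem~\ref{t:vervyu} applied with parameters $(n,h,\tilde m)$ produces $x\in\zo^n$ with $C(x)>n-H\log(2^{\tilde m}+1)\geq\sigma n$ such that every $z'\in\zo^{\tilde m}$ with $C(z'\mid x)\leq h$ satisfies $C(z')\leq h+\log n+\log\tilde m+O(\log\log n,\log\log\tilde m)$. Because $C(x)\geq\sigma n$, the extractor hypothesis yields $\alpha_x\in\zo^h$ with $C(z)\geq(1-\epsilon)m$, where $z:=f(x,\alpha_x)$.

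Let $\tilde z$ denote the $\tilde m$-bit prefix of $z$ (so $\tilde z=z$ when $\tilde m=m$). Since $\tilde z$ is produced from $x$ via $\alpha_x$ and a fixed description of $f$ followed by truncation to length $\tilde m$, we have $C(\tilde z\mid x)\leq h+O(1)$, and the V-V conclusion for this $x$ (after absorbing the additive $O(1)$ into the complexity slack rather than into $h$) gives $C(\tilde z)\leq h+\log n+\log\tilde m+O(\log\log n,\log\log\tilde m)=o(m)$, using the hypothesis $m=\omega(\log n+h)$. Encoding $z$ as $\tilde z$ concatenated self-delimitingly with its remaining $m-\tilde m$ bits yields $C(z)\leq C(\tilde z)+(m-\tilde m)+O(\log m)\leq(m-\tilde m)+o(m)$.

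Combining this with $C(z)\geq(1-\epsilon)m$ gives $\epsilon m\geq\tilde m-o(m)$, so $\epsilon\geq\tilde m/m-o(1)$. If $\tilde m=m$, this already yields $\epsilon\geq1-o(1)$; otherwise $\tilde m=m'\geq(1-\sigma)n/H-O(1)$, and since $m\leq n$ we obtain $\epsilon\geq m'/m-o(1)\geq(1-\sigma)/H-o(1)$, as required. The main technical delicacy is handling the machine-dependent additive constant $c_f$ in $C(z\mid x)\leq h+c_f$ arising from the description of $f$: it must be absorbed within V-V's additive complexity slack (which remains $o(m)$) rather than by substituting $h+c_f$ for $h$ in V-V's hypothesis, since that substitution would inflate $H$ by the factor $2^{c_f}$ and weaken the target bound by a constant multiplicative factor.
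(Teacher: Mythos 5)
Your argument has the same skeleton as the paper's: define an auxiliary length $\tilde m=\min(m,m')$ with $m'\approx(1-\sigma)n/H$, invoke Vereshchagin--Vyugin to obtain a source $x$ that is both complex enough to lie in the extractor's domain and ``hard'' in the V-V sense, let the extractor run, truncate its output to $\tilde m$ bits, bound the truncation's complexity via V-V, recover a bound on $C(f(x,\alpha_x))$ by re-attaching the discarded $m-\tilde m$ bits, and divide by $m$. The final arithmetic, including the observation that the case $\tilde m=m$ is trivial, is fine.

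The weak point is your handling of the machine constant, and you actually diagnose the issue more candidly than the paper does. You are right that the honest bound is $C(\tilde z\mid x)\leq h+c_f$ for a machine-dependent constant $c_f\ge 1$, and you are right that feeding $h+c_f$ into V-V as the advice parameter replaces $H=2^{h+1}-1$ by $H_{c_f}=2^{h+c_f+1}-1\approx 2^{c_f}H$, a constant-factor loss. (The paper's own proof applies V-V at $h+c$ and then, without justification, continues to write $H$ rather than $2^{h+c+1}-1$, so the issue is present there too.) But your proposed resolution --- applying V-V at $h$ and then ``absorbing the additive $O(1)$ into the complexity slack rather than into $h$'' --- is not licensed by Theorem~\ref{t:vervyu} as stated. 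V-V's conclusion quantifies only over strings $z'$ with $C(z'\mid x)\leq h$; its premise has an exact threshold, with no tolerance on that side. A string with $C(z'\mid x)=h+1$ is simply not covered, and the $O(\log\log n,\log\log m)$ slack sits only in the conclusion's complexity bound, not in its hypothesis. So at the step where you assert $C(\tilde z)=o(m)$, you are appealing to a strengthening of V-V that the paper never states or proves. To make your route rigorous you would need either (i) a version of V-V proved for threshold $h+O(1)$ with the same $H$, or (ii) to do what the paper does --- invoke V-V at $h+c_f$ --- and accept the $2^{c_f}$ degradation of $H$, which would then force you to state and prove a correspondingly weaker version of the theorem. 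As written, there is a genuine gap at that step.
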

\smallskip

\begin{proof} Let $m' = \min (\lfloor \frac{1-\sigma}{H} \cdot n\rfloor, m)$. Note that $m' \geq  \lfloor \frac{1-\sigma}{H} \cdot m\rfloor$.
Let $x$ be the string guaranteed by the Vereshchagin-Vyugin Theorem~\ref{t:vervyu} for the parameters $n, h + c, m'$, where $c$ is a constant that will be specified later. Note that $C(x) > n - H\cdot m' \geq \sigma \cdot n$. By assumption there is a string $\alpha_x$ such that $C(f(x, \alpha_x)) \geq (1-\epsilon)m$. Let $z$ be the prefix of length $m'$ of $f(x,\alpha_x)$. Note that $C(f(x, \alpha_x)) \leq C(z) + (m-m') + 2\log m + O(1)$, which implies that
$C(z) \geq (1-\epsilon)m - m + m' - 2 \log m - O(1) \geq \frac{(1-\sigma)m}{H} - \epsilon m - 2 \log m - O(1)$.

We also have $C(z \mid x) \leq |\alpha_x| + c = h + c$, for some constant $c$. It follows from Theorem~\ref{t:vervyu} that
$C(z) < h + \log n + \log m' + O(\log \log n, \log \log m')$. So,
$\frac{(1-\sigma)}{H}m - \epsilon m - 2 \log m - O(1) \leq h + \log n + O(\log \log n, \log \log m')$, which implies that $\epsilon \geq \frac{1-\sigma}{H} - \frac{h + O(\log n)}{m} = \frac{1-\sigma}{H} - o(1)$.
\qed \end{proof}

\section{Single source Kolmogorov extractors with small advice}
We move to showing the positive results in Fact~\ref{t:boundsinformal} and Fact~\ref{t:boundspolysizeinformal} regarding randomness extraction with small advice that complement the negative result in Theorem~\ref{t:limits}. The constructions use the parameters $n, n_1, m, k, \delta$ and $d$. We denote $N=2^n, N_1 = 2^{n_1}, M=2^m, \Delta = 2^{\delta}$ and $D= 2^{d}$. We identify in the natural way a function $E: \zo^n \times \zo^{n_1} \mapping \zo^m$ with an $[N] \times [N_1]$ table colored with colors from $[M]$. For $A \subseteq [M]$, we say that an $(u,v)$ cell of the table is an $A$-cell if $E(u,v) \in A$. The reader might find helpful to consult the proof plan presented in the Introduction. As explained there the notion of a \emph{balanced table} plays an important role.
\begin{definition}
A table $E: [N] \times [N_1] \mapping [M]$ is $(K,D,\Delta)$-balanced if for any $B \subseteq [N]$ with $|B| \geq K$, for any $A \subseteq [M]$ with $\frac{|A|}{M} \geq \frac{1}{D}$, it holds that
\[
\frac{|\mbox{$A$-cells in $B \times [N_1]$}|}{|B| \times N_1} \leq \Delta \cdot \frac{|A|}{M}.
\]
\end{definition}
The following lemma shows that a balanced table is a good Kolmogorov extractor.
 \begin{lemma}
 \label{l:tableext}
  Let $E: [N] \times [N_1] \mapping [M]$
be a $(K,D,\Delta)$-balanced table and $d = \delta + O(1)$. Suppose $C(E \mid n) = O(1)$ and $n_1, k, d$, and $\delta$ are computable from $n$. Let $(x,y) \in [N]\times [N_1]$ be such that $C(x \mid n) \geq k + O(1)$ and $C(y \mid x ) \geq n_1$. Let $z = E(x,y)$. Then $C(z \mid m) > m-d$. 

(Note: $O(1)$ means that there exist constants, depending only on the universal machine, for which the statements hold.)
  \end{lemma}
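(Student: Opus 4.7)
The plan is to proceed by contradiction: assume $C(z \mid m) \leq m - d$. Then $z$ lies in the recursively enumerable set $A_0 := \{z' \in [M] : C(z' \mid m) \leq m - d\}$, which has $|A_0| \leq 2^{m-d+1} = 2M/D$. Because the balanced hypothesis requires a target set of density at least $1/D$, while $|A_0|/M$ could be smaller, I would pad: let $A := A_0 \cup L$, where $L$ is the lexicographically first $\lceil M/D \rceil$ strings of $[M]$. Then $A \supseteq A_0$, $|A|/M \in [1/D,\, 3/D]$, and $A$ is r.e.\ given $n$, since $m, d$ are computable from $n$ and $C(E \mid n) = O(1)$.

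Next, I would define the ``bad'' source set $B := \{x' \in [N] : |\{y' \in [N_1] : E(x',y') \in A\}|/N_1 \geq 2\Delta \cdot |A|/M\}$. If $|B| \geq K$, then the density of $A$-cells inside $B \times [N_1]$ is at least $2\Delta \cdot |A|/M$, violating the balanced property; hence $|B| < K = 2^k$. The set $B$ is itself r.e.\ given $n$, since we only need to monitor the growing enumeration of $A$ and flag any $x'$ whose row-count crosses the threshold. Enumerating $B$ and naming $x$ by its index gives $C(x \mid n) \leq k + O(1)$, so choosing the $O(1)$ in the hypothesis $C(x \mid n) \geq k + O(1)$ large enough forces $x \notin B$.

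Since $z = E(x,y) \in A_0 \subseteq A$, we have $y \in T_x := \{y' : E(x,y') \in A\}$, and $x \notin B$ gives $|T_x| < 2\Delta \cdot |A|/M \cdot N_1 \leq 6 N_1/2^{d-\delta}$. The slice $T_x$ is r.e.\ given $x$ alone (recovering $n = |x|$, then $m, d, E$, then the enumeration of $A$), so describing $y$ by its index in this enumeration yields $C(y \mid x) \leq \log|T_x| + O(1) \leq n_1 - (d-\delta) + O(1)$. The hypothesis $d = \delta + O(1)$ lets us choose $d-\delta$ large enough to absorb that $O(1)$, giving $C(y \mid x) < n_1$ and contradicting $C(y \mid x) \geq n_1$.

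The main obstacle I anticipate is the bookkeeping around $A_0$ being merely r.e., with unknown size at any finite stage: a naive replacement of $A_0$ by an exactly-$(M/D)$-sized set would break either computability or the inclusion $A \supseteq A_0$. Padding with the decidable set $L$ simultaneously preserves $A_0 \subseteq A$, the density bound $|A|/M \leq 3/D$, and the r.e.\ status of the derived objects $A, B, T_x$, which is exactly what makes the Kolmogorov index-encoding arguments cost only an additive $O(1)$. The secondary care point is threading the constants $2, 3, 6$ from the Markov step and the padding into the two explicit $O(1)$ slacks, namely those in $d = \delta + O(1)$ and in $C(x \mid n) \geq k + O(1)$.
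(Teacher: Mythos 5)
Your proposal follows essentially the same route as the paper's proof: inflate the level set $\{z' : C(z'\mid m)\le m-d\}$ to an r.e.\ set $A$ of size $\ge M/D$ so that the balance condition applies, call a row bad when it is disproportionately rich in $A$-cells, cap the number of bad rows by $K$ using balance, conclude that $x$ is not bad, and compress $y$ by its rank among the $A$-cells in row $x$. The one cosmetic difference is how $A$ is inflated: the paper relaxes the complexity threshold to $m-d+O(1)$ (choosing the constant so that $|A|\ge 2^{m-d}$), whereas you pad with the lex-first $\lceil M/D\rceil$ strings $L$; both give $|A|/M\in[1/D,\,O(1)/D]$ and keep $A$ enumerable from $n$.

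One step needs tightening. Your $B$ is defined by the data-dependent threshold $2\Delta\cdot|A|/M\cdot N_1$, and $|A|$ is not computable from $n$ (only bounds are known). Hence ``monitor the growing enumeration of $A$ and flag any $x'$ whose row-count crosses the threshold'' is not a valid r.e.\ enumeration of $B$ as stated: as the enumeration proceeds both the running row-count and the running threshold increase, so a row flagged early may end up below the final threshold, and you cannot bound the size of the set of rows you actually flag. The fix is to compare against a \emph{fixed computable} upper bound, e.g.\ $3\Delta N_1/D$ (using $|A|/M\le 3/D$): any row exceeding it certainly exceeds $\Delta|A|/M\cdot N_1$, so by balance fewer than $K$ rows do, and the fixed threshold makes this set genuinely r.e.\ from $n$. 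The factor~$2$ in your definition of $B$ then becomes unnecessary. This is the same elision the paper makes when it asserts without comment that the bad rows can be enumerated from $n$; with the fix your constant bookkeeping goes through and the contradiction with $C(y\mid x)\ge n_1$ follows exactly as you describe.
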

 \begin{proof}  Suppose $C(z \mid m) \leq m-d$. 
 
 Let $A = \{w \in \zo^m \mid C(w \mid m) \leq m-d + O(1)\}$, where the constant $O(1)$ is chosen so that $|A| \geq 2^{m-d}$. Also note that $|A| \leq 2^{m-d+O(1)}$. 
 
 We say that a row $v$ is \emph{bad} if the number $A$-cells in the $\{v \} \times [N_1]$ rectangle of $E$ is $> \Delta \cdot \frac{|A|}{M} \cdot N_1$.
 The number of bad rows is at most $K$, because the table $E$ is $(K,D,\Delta)$-balanced.
 Therefore a bad row $v$ is described by the information needed to enumerate the bad rows (and this information is derivable from $n$) and from its rank in the enumeration of bad rows. So, if $v$ is bad, $C(v \mid n) < k+O(1)$. 
 
 Since $C(x \mid n) > k + O(1)$, it follows that $x$ is good. Therefore, the number of $A$-cells in the $\{x\} \times [N_1]$ rectangle of $E$ is $\leq \Delta \cdot \frac{|A|}{M} \cdot N_1 =2^{\delta - d + n_1 + O(1)}$.
 
 Note that, by our assumption, the cell $(x,y)$ is an $A$-cell. Since we can effectively enumerate the $A$-cells in the table $E$, the string $y$, given $x$, can be described by the rank of $(x,y)$ among the $A$-cells in the $\{ x \} \times [N_1]$ rectangle of $E$.
 
 So, $C(y \mid x) \leq \delta - d + n_1 +O(1)$ and the right hand side is less than $n_1$ for an appropriate choice of the constant $O(1)$ in the relation between $d$ and $\delta$. We obtain that $C(y \mid x) < n_1$, contradiction.
 \qed \end{proof}
 \smallskip
 
 The next lemma establishes the parameters for which balanced tables exist.
 \begin{lemma}
 \label{l:probtable}
 Suppose the parameters satisfy the following relations: $D = O(\Delta), n/\delta = o(N_1)$, and $M = o(\delta\cdot K \cdot N_1)$. Then there exists a table $E:[N] \times [N_1] \mapping [M]$ that is $(K,D,\Delta)$-balanced.
 \end{lemma}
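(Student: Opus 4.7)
The plan is to prove Lemma~\ref{l:probtable} by the probabilistic method: choose $E : [N] \times [N_1] \mapping [M]$ by picking each entry uniformly and independently from $[M]$, and show that with positive probability $E$ is $(K,D,\Delta)$-balanced.

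First I would reduce the collection of pairs $(B,A)$ that need to be checked. Fix $A \subseteq [M]$ and let $a_v$ denote the number of $A$-cells in row $v$. If some $B$ with $|B| \geq K$ has row-average $\frac{1}{|B|}\sum_{v \in B} a_v$ exceeding the threshold $\Delta\,(|A|/M)\,N_1$, then choosing the top-$K$ rows of $B$ (those with the largest $a_v$) yields a subset of size exactly $K$ whose average is at least as large, so that subset also violates. Hence it suffices to union-bound over pairs $(B,A)$ with $|B| = K$ exactly and $|A| \geq M/D$.

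For each such fixed pair, the count $X_{B,A}$ of $A$-cells in $B \times [N_1]$ is a sum of $K N_1$ independent Bernoulli$(|A|/M)$ random variables with mean $\mu = K N_1 |A|/M \geq K N_1/D$. A standard multiplicative Chernoff bound gives
\[
\prob\bigl[X_{B,A} > \Delta \mu\bigr] \;\leq\; \left(\frac{e}{\Delta}\right)^{\Delta \mu} \;\leq\; 2^{-\Omega(\delta\,\Delta \mu)},
\]
and using $\Delta/D = \Omega(1)$ from the hypothesis $D = O(\Delta)$ this exponent becomes $\Omega(\delta\,K N_1)$. A union bound over the at most $\binom{N}{K}\cdot 2^{M} \leq 2^{Kn + M + O(K)}$ pairs then bounds the total failure probability by $2^{\,Kn + M - c\,\delta K N_1}$ for some absolute constant $c > 0$.

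The two hypotheses do the rest: $M = o(\delta K N_1)$ kills the $M$ term, while rewriting $n/\delta = o(N_1)$ as $Kn = K \cdot o(\delta N_1) = o(\delta K N_1)$ kills the $Kn$ term, so the total failure probability tends to $0$ and a balanced table exists. The main point to watch is the arithmetic bookkeeping in this last step, together with the tacit assumption that $\Delta$ is bounded away from $1$ (required for the clean $(e/\Delta)^{\Delta\mu}$ form of the Chernoff bound, and in any case the only regime in which the conclusion is nontrivial).
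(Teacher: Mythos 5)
Your proposal is correct and takes essentially the same route as the paper's: pick $E$ at random, apply a multiplicative Chernoff bound to the count of $A$-cells in each fixed rectangle $B \times [N_1]$, and then union-bound, with the three hypotheses entering exactly as you describe (and with the same tacit requirement that $\Delta$ be bounded somewhat above $1$, which in the paper appears as the $\ln(\Delta'/3)$ factor). The only difference is that the paper also reduces $A$ to size exactly $M/D$ and bounds the number of such $A$'s by $\binom{M}{M/D} \leq (eD)^{M/D}$ rather than your cruder $2^M$, but under $M = o(\delta K N_1)$ either estimate is ample.
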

 \begin{proof}  The proof is by the probabilistic method and is presented in the Appendix.
 \qed \end{proof}
 \smallskip
 
 We can now prove Fact~\ref{t:boundsinformal}. The formal statement is as follows.
 \begin{theorem}
 \label{t:extractsmalladvice}
 Parameters: Let $m(n)$ and $h(n)$ be computable functions such that $m(n) < n$ for all $n$ and $h(n) = \omega (\log \frac{n}{m(n)})$.

 There exists a computable function $E:\zo^n \times \zo^{h(n)} \mapping \zo^{m(n)}$, such that for every $x \in \zo^n$ with $C(x \mid n) \geq m(n)$, there exists $\alpha_x \in \zo^{h(n)}$ such that
 $C(E(x, \alpha_x) | m(n)
 ) \geq m(n) - o(m(n))$.
 
 \end{theorem}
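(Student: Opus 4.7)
The strategy is to combine Lemma~\ref{l:probtable} (existence of balanced tables) with Lemma~\ref{l:tableext} (balanced tables are Kolmogorov extractors). Set $n_1 = h(n)$, $k = m(n) - O(1)$ (chosen so that $C(x \mid n) \geq m(n)$ automatically implies $C(x \mid n) \geq k + O(1)$ as required by Lemma~\ref{l:tableext}), $M = 2^{m(n)}$, and $K = 2^k$. The remaining freedom is in the error parameters: one must pick $\delta(n)$ (and $d = \delta + O(1)$, $D = O(\Delta)$) so as to satisfy the three hypotheses of Lemma~\ref{l:probtable} while keeping $\delta(n) = o(m(n))$, because that is what translates into the Kolmogorov deficit $m(n) - o(m(n))$ in Lemma~\ref{l:tableext}.

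The crucial quantitative observation is that the hypothesis $h(n) = \omega(\log(n/m(n)))$ rearranges to $n/2^{h(n)} = o(m(n))$. Hence one can fix a computable integer-valued function $\delta(n)$ sandwiched between these two rates, for instance $\delta(n) = \max(1, \lceil \sqrt{(n/2^{h(n)}) \cdot m(n)} \rceil)$, which satisfies simultaneously $\delta(n) = \omega(n/2^{h(n)})$ and $\delta(n) = o(m(n))$. The three hypotheses of Lemma~\ref{l:probtable} are then routine to verify: $D = O(\Delta)$ is immediate on taking $D = 2\Delta$; $n/\delta(n) = o(N_1) = o(2^{h(n)})$ is just the lower bound on $\delta$; and $M = o(\delta \cdot K \cdot N_1)$, with $M = 2^{m(n)}$ and $K = 2^k = \Theta(2^{m(n)})$, reduces to $\delta(n) \cdot 2^{h(n)} \to \infty$, which follows from $h(n) \to \infty$ (itself implied by $h(n) = \omega(\log(n/m(n)))$).

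For each $n$, define $E_n$ as the lexicographically first $(K, D, \Delta)$-balanced table over these parameters, located by brute-force enumeration over all $[N]\times[N_1]\to[M]$ tables; the search terminates because Lemma~\ref{l:probtable} guarantees the existence of such a table. This makes $E$ computable with $C(E \mid n) = O(1)$, as required by Lemma~\ref{l:tableext}. Now given any $x \in \zo^n$ with $C(x \mid n) \geq m(n)$, an incompressibility argument produces $\alpha_x \in \zo^{h(n)}$ with $C(\alpha_x \mid x) \geq h(n) = n_1$, since the $2^{h(n)}$ strings of length $h(n)$ outnumber the at most $2^{h(n)}-1$ shorter programs. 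Lemma~\ref{l:tableext} then yields $C(E(x,\alpha_x) \mid m(n)) > m(n) - d = m(n) - \delta(n) - O(1) = m(n) - o(m(n))$, which is the desired conclusion.

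The main obstacle, as anticipated, is the parameter sandwich in the second paragraph: $\delta$ must grow fast enough that the middle hypothesis of Lemma~\ref{l:probtable} holds, yet slow enough to keep the deficit $o(m(n))$. The feasibility of this sandwich is exactly what the hypothesis $h(n) = \omega(\log(n/m(n)))$ buys us, so the whole theorem hinges on that single estimate; the remaining pieces (brute-force search for computability, incompressible seeds by counting, direct invocation of Lemma~\ref{l:tableext}) are mechanical.
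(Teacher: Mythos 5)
Your proposal is correct and follows essentially the same route as the paper: set $n_1 = h(n)$, pick $\delta$ satisfying the hypotheses of Lemma~\ref{l:probtable} while keeping $\delta = o(m)$, obtain a balanced table by brute-force search so that $C(E \mid n) = O(1)$, take $\alpha_x$ incompressible relative to $x$, and invoke Lemma~\ref{l:tableext}. The only cosmetic difference is the choice of $\delta$ (the paper uses $\delta = n/2^{0.5 h(n)}$ while you use the geometric mean $\sqrt{(n/2^{h(n)}) \cdot m(n)}$); both satisfy the same sandwich and rest on the same reading of the hypothesis $h(n) = \omega(\log(n/m(n)))$.
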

 \begin{proof} We take $\delta = \frac{n}{2^{0.5 h(n)}}$, $d = \delta + c$, where $c$ is the constant from Lemma~\ref{l:tableext}, $n_1 = h(n)$.
 
 By Lemma~\ref{l:probtable}, there exists a table $E: [N] \times [N_1] \mapping [M]$ that is
 $(K,D,\Delta)$-balanced, and by brute force one can build such a table from $n$. Thus we obtain such a table $E$ with $C(E \mid n) = O(1)$. We take $\alpha_x$ to be a string in $\zo^{h(n)}$ such that $C(\alpha_x \mid x) \geq h(n)$. Using Lemma~\ref{l:tableext}, we obtain that $C(E(x, \alpha_x) \mid m) \geq m(n) - d = m(n) -  \frac{n}{2^{0.5 h(n)}} - c =  m- o(m)$.
 \qed \end{proof}
 \smallskip
 
 Our next goal is to derandomize the construction in Theorem~\ref{t:extractsmalladvice}. As explained in the Introduction the key observation is that checking if a table is balanced can be done, in an approximate sense, by constant-depth circuits with relatively small size.
 \smallskip
 
 \begin{lemma}
 \label{l:countcircuit}
 The parameters $n_1,m,k, d$, and $\delta$ are positive integers computable from $n$  in polynomial time.
 We assume $k \leq n, m \leq k, d \leq n$.
 
 There exists a circuit $G$ of size $\poly(N^K)$ and constant depth such that for any table $E: [N] \times [N_1] \mapping [M]$,
 
 (a) if $G(E) = 1$, then $E$ is $(K,D,1.03 \Delta)$-balanced,
 
 (b) if $E$ is $(K,D, \Delta)$-balanced, then $G(E) = 1$.
 
 \end{lemma}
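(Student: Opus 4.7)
The plan is to build $G$ as a giant AND over all ``suspicious'' pairs $(B,A)$, each of whose checks is implemented by Ajtai's approximate counting circuit. First, by a standard averaging argument it suffices to verify the balanced-table condition only for $B$ of size \emph{exactly} $K$: if $\#A$-cells in $B\times[N_1]\le \alpha\,(|A|/M)KN_1$ for every such $B$, then summing this inequality over all size-$K$ subsets of an arbitrary $B'$ with $|B'|\ge K$ and using that each row of $B'$ lies in $\binom{|B'|-1}{K-1}$ such subsets yields $\#A$-cells in $B'\times[N_1]\le \alpha(|A|/M)|B'|N_1$. So $G$ only needs to enumerate subsets $B\subseteq [N]$ of size exactly $K$, of which there are $\binom{N}{K}\le N^K$.

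For the admissible sets $A$, I would enumerate \emph{every} $A\subseteq [M]$ with $|A|\ge M/D$. Because the hypothesis $m\le k$ gives $M=2^m\le 2^k=K$, the number of such $A$ is at most $2^M\le 2^K$, so the total number of pairs $(B,A)$ is at most $N^K\cdot 2^K = 2^{(n+1)K}$, which is $\poly(N^K)$. For each pair I would build a sub-circuit as follows: compute, for each cell $(v,w)\in B\times [N_1]$, the indicator $[E(v,w)\in A]$ as a depth-$2$ OR-of-ANDs of size $O(mM)$ over the $m$ input bits encoding $E(v,w)$, where $A$ appears as a hardwired constant; then feed these $KN_1$ bits to Ajtai's circuit from Theorem~\ref{t:ajtai} with hardwired threshold $a = 1.015\cdot \Delta(|A|/M)KN_1$ and error parameter $1/\epsilon = 100$. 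Each sub-circuit has constant depth and size $\poly(KN_1)$. The circuit $G$ is the AND of all these outputs; this contributes one extra level of unbounded fan-in, keeping the total depth constant and the total size $\poly(N^K)$.

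For correctness, if $E$ is $(K,D,\Delta)$-balanced, then for every checked pair the true count satisfies $\#A$-cells in $B\times[N_1]\le \Delta(|A|/M)KN_1 \le (1-\epsilon)a$, so every Ajtai sub-circuit outputs $1$ and hence $G(E)=1$, proving (b). Conversely, if $G(E)=1$, then for every size-$K$ $B$ and every admissible $A$ the count is at most $(1+\epsilon)a = 1.01\cdot 1.015\,\Delta(|A|/M)KN_1 < 1.03\,\Delta(|A|/M)KN_1$, and invoking the averaging argument above extends this bound to all $B$ with $|B|\ge K$, so $E$ is $(K,D,1.03\Delta)$-balanced, proving (a). The main delicate point is fitting the enumeration of $A$ inside the $\poly(N^K)$ budget, which only works because $m\le k$ forces $2^M\le 2^K$; the remainder is careful bookkeeping of the slack between $a$, $(1\pm\epsilon)a$, and the exact bounds from the definition of a balanced table.
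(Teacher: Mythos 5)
Your proposal is correct and follows essentially the same approach as the paper: an AND of Ajtai approximate-counting subcircuits with hardwired thresholds, one per relevant pair $(B,A)$, with the size budget kept at $\poly(N^K)$. The only cosmetic difference is that you enumerate all $A\subseteq[M]$ with $|A|\ge M/D$ (using $m\le k$ to bound the count by $2^K$), while the paper enumerates only $A$ of size exactly $M/D$ and extends to larger $A$ by the same averaging argument you spell out for $B$; the two bookkeeping choices are interchangeable.
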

 \smallskip
 
 \begin{proof} Let $a = (1/0.99) \Delta \cdot 1/D \cdot K \cdot N_1$. Let us fix for the moment  a set
 of rows $B \subseteq [N]$ of size $|B| = K$ and a set of colors $A \subseteq [M]$ of size $|A| = M/D$. Let $x_{B,A}$ be a binary string indicating which cells in the $B \times [N_1]$ rectangle of $E$ are $A$-colored. Formally, $x_{B,A}$ is the string of length $K \cdot N_1$, whose $\langle i, j \rangle$-th bit is $1$ if the cell $(i,j)$ in the rectangle $B \times [N_1]$ of $E$ is an $A$-cell and $0$ if it is not.

 By Ajtai's Theorem~\ref{t:ajtai}, there exists a polynomial-size constant-depth circuit $G'$ (which does not depend on $B$ and $A$) with $a$ hardwired and such that 
 \begin{itemize}
	\item  $G'(x_{B,A}) = 1$ if the number of $A$-cells in $B \times [N_1]$ is at most $(1-0.01) \cdot a$, and
	\item $G'(x_{B,A}) = 0$ if the number of $A$-cells in $B \times [N_1]$ is at least $(1+0.01) \cdot a$.
\end{itemize}
Now we describe the circuit $G$.

 The circuit $G$ on input an encoding of the table $E$ (having length $N \cdot N_1 \cdot m$) computes in constant depth a string $x_{B,A}$  for every $B \subseteq [N]$ with $|B| = K$ and for every $A \subseteq [M]$ with $A = M/D$. There are ${N \choose K} {M \choose M/D} = \poly(N^K)$ such strings $X_{B,A}$. Each such string $x_{B,A}$ is the input of a copy of $G'$. The output gates of all the copies of $G'$ are connected to an AND gate, which is the output gate.
 
 If $G(E) = 1$, then $G'(x_{B,A}) = 1$ for all $B$'s and $A$'s as above. This implies that for all $B \subseteq [N]$ with $|B| \geq K$ and all $A \subseteq [M]$ of size $\geq M/D$, the number of $A$-cells in the $B \times [N_1]$ rectangle of $E$ is at most $(1+0.01)a \leq (1.03) \cdot \Delta \cdot (1/D) \cdot K \cdot N_1$, \ie, $E$ is $(K, D, 1.03\Delta)$-balanced.
 
 In the other direction, if $E$ is $(K,D, \Delta)$-balanced then for all $B \subseteq [N]$ with $|B| = K$ and for all $A \subseteq [M]$ with $A = M/D$, the number of $A$-cells in $B \times [N_1]$ is at most $\Delta \cdot (1/D) \cdot K \cdot N_1 = (1-0.01)a$, which implies that $G(E) =1$.
\qed \end{proof}
 \smallskip
 
 We next prove Fact~\ref{t:boundspolysizeinformal}. The formal statement is as follows.
\smallskip

 \begin{theorem}
 \label{t:derandextract}
 Parameters: Let $m(n)$ and $h(n)$ be polynomial-computable functions such that $m(n) \leq \log^c n$ for some constant $c$ and
 $h(n) = \omega (\log(\frac{n}{m(n)}))$. 
 
 There exists a function $E: \zo^n \times \zo^{h(n)} \mapping \zo^{m(n)}$,  computable by an  effectively constructible circuit having polynomial size and the following property: For every $x \in \zo^n$ with $C(x \mid n) \geq m(n) + O(1)$, there exists a string $\alpha_x \in \zo^{h(n)}$ such that $C( E(x, \alpha_x) \mid m(n)) \geq m - o(m)$.
 \end{theorem}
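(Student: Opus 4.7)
The plan is to derandomize the construction from the proof of Theorem~\ref{t:extractsmalladvice} by instantiating the Nisan-Wigderson pseudo-random generator of Theorem~\ref{t:NWgen} against the balanced-table checking circuit of Lemma~\ref{l:countcircuit}. I fix the same parameters as in the proof of Theorem~\ref{t:extractsmalladvice}: take $\delta = \lceil n/2^{0.5 h(n)}\rceil$, $d = \delta + O(1)$, $n_1 = h(n)$, $k = m(n) + O(1)$, and derive $K = 2^k$, $D = 2^d$, $\Delta = 2^\delta$, $N = 2^n$, $N_1 = 2^{n_1}$, $M = 2^m$. By Lemma~\ref{l:probtable} a uniformly random table $E : [N] \times [N_1] \mapping [M]$ is $(K,D,\Delta)$-balanced with probability at least $1/2$.

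Next, invoke Lemma~\ref{l:countcircuit} to obtain a constant-depth circuit $G$ of size $\poly(N^K)$ that accepts every $(K,D,\Delta)$-balanced table and accepts only $(K,D,1.03\Delta)$-balanced tables. The table has encoding length $\tilde{N} = N \cdot N_1 \cdot m$, so $\log \tilde{N} = \Theta(n)$. The hypothesis $m(n) \leq \log^c n$ forces $K = 2^{O(\log^c n)}$ and thus $\log|G| = O(nK) \leq (\log\tilde{N})^{c_0}$ for a suitable constant $c_0$. Apply Theorem~\ref{t:NWgen} with parameters matching this size and constant depth: the resulting generator has a seed of length $O(\log^{c'}\tilde{N}) = \poly(n)$ and fools $G$ within error $1/100$.

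Since $G$ accepts at least half of the uniformly random tables while $\mbox{NW-gen}$ fools $G$ within error $1/100$, there exists a seed $s^*$ with $G(\mbox{NW-gen}(s^*)) = 1$; find such an $s^*$ by brute-force search over all seeds, which keeps the construction effective. Viewing $E^* := \mbox{NW-gen}(s^*)$ as a table in $[N] \times [N_1] \mapping [M]$, we conclude that $E^*$ is $(K,D,1.03\Delta)$-balanced. Then apply Lemma~\ref{l:tableext} to $E^*$: for every $x \in \zo^n$ with $C(x \mid n) \geq m(n) + O(1)$, taking $\alpha_x \in \zo^{h(n)}$ with $C(\alpha_x \mid x) \geq h(n)$ yields $C(E^*(x,\alpha_x) \mid m) \geq m - d - O(1) = m - o(m)$, since $d = \delta + O(1) = n/2^{0.5 h(n)} + O(1) = o(m)$ by the hypothesis $h(n) = \omega(\log(n/m(n)))$.

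Finally, to realize $E^*$ by a polynomial-size effectively constructible circuit: hardwire $s^*$ (which has $\poly(n)$ bits); on input $(x,\alpha)$, decode the index $\langle x,\alpha\rangle$ and output the $m$ bits of $\mbox{NW-gen}(s^*)$ at those positions by invoking the per-bit local computation procedure from Theorem~\ref{t:NWgen}, each call costing time $\poly(\log\tilde{N}) = \poly(n)$. The main obstacle is parameter matching: one must ensure that $\log|G| = O(nK)$ fits within the $(\log\tilde{N})^{O(1)}$ regime that Theorem~\ref{t:NWgen} can fool, and this is precisely where the restriction $m(n) \leq \log^c n$ is used---allowing $m$ to grow faster would blow $K = 2^m$, and hence $|G|$, beyond the reach of NW-gen, and the entire argument would fail.
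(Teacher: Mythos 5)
Your proof is correct and follows essentially the same route as the paper's: it derandomizes the construction of Theorem~\ref{t:extractsmalladvice} by feeding the Nisan--Wigderson generator against Ajtai-based constant-depth circuits that test the balanced-table property (Lemma~\ref{l:countcircuit}), using the same parameter calibration ($\delta = n/2^{0.5h(n)}$, $d = \delta + O(1)$, $n_1 = h(n)$, $k = m(n)$), the same abundance bound from Lemma~\ref{l:probtable}, the same selection of a good seed $s^*$, the appeal to Lemma~\ref{l:tableext} for the tolerance $1.03\Delta$, and the per-bit computability of NW-gen to realize $E$ as an effectively constructible polynomial-size circuit. The size-vs-seed calibration you identify (using $m(n)\le\log^c n$ to keep $\log|G| = O(nK)$ within the regime NW-gen can fool) is exactly the step the paper makes explicitly, so the two proofs coincide.
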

 \smallskip

 \begin{proof} Let $k = m(n), \delta = \frac{n}{2^{0.5 h(n)}}, d = \delta + c + \log 1.03$  (where $c$ is the constant from Lemma~\ref{l:tableext}), and  $n_1 = h(n)$.
 
  Let $G$ be the circuit promised by Lemma~\ref{l:countcircuit} for these parameters.  Let $d_{Ajtai}$ be the depth of the circuit $G$.  Note that the size of $G$ is bounded by $\poly(N^K) < 2^{ \log^{c+2} N}$ (taking into account the bound on $m$ given in the hypothesis).
  
  Let $\tilde{N} = N \cdot N_1 \cdot m$. This is the size of an encoding of a table $E: [N] \times [N_1] \mapping [M]$. Let $\mbox{NW-gen} : \zo^{\log^{c'}(\tilde{N})} \mapping \zo^{\tilde{N}}$ be the Nisan-Wigderson pseudo-random generator given by Theorem~\ref{t:NWgen} that fools circuits of for depth $d_{Ajtai}$ and size $2^{ \log^{c+2} N}$ .

 The probabilistic argument in Lemma~\ref{l:probtable} can be modified to show that among the 
 tables of type $E: [N] \times [N_1] \mapping [M]$ the fraction of those which are $(K,D, \Delta)$-balanced is at least $0.51$.  Since $G$ accepts all such tables,
 \[
  \prob_{E \in \zo^{\tilde{N}}}[ G(E) = 1] \geq 0.51.
  \]
  Since the circuit $G$ has depth equal to $d_{Ajtai}$ and size bounded by $2^{ \log^{c+2} N}$, it follows that if we replace a random $E \in \zo^{\tilde{N}}$ by $\mbox{NW-gen}(s)$ for a random seed $s \in \zo^{\log^{c'}(\tilde{N})}$, we obtain
 \[
\prob_{s \in \zo^{\log^{c'}(\tilde{N})}}[ G(\mbox{NW-gen}(s)) = 1] \geq 0.5.
 \]
 We only need the fact that there exists a string $s \in \zo^{\log^{c'}(\tilde{N})}$
 such that $\mbox{NW-gen}(s)$ is a table $E: [N] \times [N_1] \mapping [M]$ that is $(K,D, 1.03 \Delta)$-balanced. We fix such an $s$ that is computable from $n$ (say, the smallest $s$ that has the property) and the corresponding table $E$ produced by the Nisan-Wigderson pseudo-random generator on seed $s$.
 
Let us consider $x \in \zo^n$ with $C(x \mid n) \geq k$ and $\alpha_x \in \zo^{n_1}$ with $C(\alpha_x \mid y)\geq n_1$. Since $E$ is $(K, D, 1.03\Delta)$-balanced, it follows from Lemma~\ref{l:tableext} that $C(E(x,\alpha_x)\mid m) \geq m - d = m - o(m)$.

Now, let us view $E$ (which is $\mbox{NW-gen}(s)$) as a function ${\rm E}: \zo^n \times \zo^{n_1} \mapping \zo^m$. From the properties (\ie, the ``Moreover ..." in Theorem~\ref{t:NWgen}) of the Nisan-Wigderson pseudo-random generator, it follows that this function can be computed by a polynomial-size circuit which has $s$ hardwired (note that the size of $s$ is $\poly(n)$). Since $s$ is also computable from $n$,  one can compute a description of the circuit, \ie, the circuit is effectively constructible. 
\qed \end{proof}

 \bibliography{c:/book-text/theory}

\bibliographystyle{alpha}

\newpage

\appendix
 
 \section{Appendix}
\medskip
\if01
 {\bf Proof of Claim~\ref{c:balancedt}.} 
 \smallskip
 
 We use the probabilistic method. It is enough to show the assertion for all $B_1$, $B_2$ and $A$ having sizes exactly $K_x$, $K_y$, and respectively $\frac{M}{D}$. Let us consider a random function $E: \zo^n \times \zo^n \mapping \zo^m$. Fix $B_1, B_2$ and $A$, satisfying the above requirement on their sizes. By the Chernoff's bound,
 \[
 \begin{array}{ll}
 \prob[|\mbox{$A$-cells in $B_1 \times B_2$}| \geq 2 \cdot \frac{|A|}{M} \cdot |B_1| \cdot |B_2| ] & \\
 \quad\quad\quad\quad \leq e^{-(1/3) (|A|/M)|B_1||B_2|} = e^{-(1/3)(1/D)K_x K_y}. &
 \end{array}
 \]
 The sets $B_1$, $B_2$, and $A$ can be chosen in ${N \choose K_x} \cdot {N \choose K_y} \cdot {M \choose M/D} \leq N^{K_x} \cdot N^{K_y} \cdot (eD)^{M/D} = e^{K_x \ln N + K_y \ln N + (M/D) + (M/D) \ln D}$ ways.
 Since $t_x \geq 13 \log n + I(x : y) + O(1)$  and $d < 6 \log n + I(x : y) + O(1)$, it follows that $t_x \geq d + 7 \log n +O(1)$ and from here $k_x \geq d + \log n + O(1)$. Since $t_y \geq 7 \log n + I(x : y) + O(1)$, it follows that $k_y \geq d + \log n  + O(1)$. It can be easily checked that, given these bounds for $k_x$ and $k_y$ and for an appropriate choice of the constant in the definition of $d$,
 \[
 e^{-(1/3)(1/D)K_x K_y} \cdot e^{K_x \ln N + K_y \ln N + (M/D) + (M/D) \ln D} < 1.
 \]
 Thus, the probability that a random $E$ satisfies the requirements is less than $1$, which implies that there exists an $E$ satisfying the claim.~\qed
 \medskip
 \fi
 {\bf Proof of Lemma~\ref{l:probtable}   .}
 \smallskip
 
  The proof is by the probabilistic method. Consider a random function $E:[N]\times [N_1] \mapping [M]$.  We evaluate the probability that $E$ fails to be $(K, D, \Delta)$-balanced. Note that if $E$ fails to be $(K,D,\Delta)$-balanced, then there exists a set $B \subseteq [N]$ of size exactly $K$ and a set $A \subseteq [M]$ of size exactly $M/D$ such that the fraction of $A$ cells in the $B \times [N_1]$ rectangle of $E$ is greater  than $\Delta \cdot |A|/M$. Let us call this latter event ${\cal S}$. We show that the probability of  ${\cal S}$ is less than $1$.  Fix 
 $B \subseteq [N]$ of size $K$ and $A \subseteq [M]$ of size $M/D$. For a fixed $(x,y) \in B \times [N_1]$, $\prob[E(x,y) \in A]=|A|/M$. The expected number of $A$-cells in $B \times [N_1]$ is $\mu = |B| \cdot N_1 \cdot |A|/M$. Let $\Delta'=\Delta -1$. 
 
 We use the following version of the Chernoff bound. If $X$ is a sum of independent Bernoulli random variables, and the expected value $E[X]= \mu$, then
$\prob[X \geq (1+\Delta)\mu] \leq e^{-\Delta (\ln (\Delta/3)) \mu}$.\footnote{The standard Chernoff inequality $\prob(X \geq (1+\Delta) \mu] \leq \big( \frac{e^\Delta}{(1+\Delta)^{(1+\Delta)}}\big)^\mu$ is presented in many textbooks. It can be checked easily that $\frac{e^\Delta}{(1+\Delta)^{(1+\Delta)}} < e^{-\Delta \ln (\Delta/3)}$.}
 
 Using these Chernoff bounds, 
 \[
 \prob [|\mbox{$A$-cells in $B \times [N_1]|$} > (1+\Delta')\mu]\leq e^{-\Delta' (\ln (\Delta'/3)) \mu}.
 \]
 The set $B$ can be chosen in ${N \choose K} \leq N^K$ ways. The set $A$ can be chosen in ${M \choose M/D} \leq (eD)^{M/D}$ ways.  It follows that the probability of ${\cal S}$ is bounded by
 \[
 N^K \cdot (eD)^{M/D} \cdot e^{-\Delta' (\ln (\Delta'/3)) \cdot K \cdot N_1 \cdot (1/D)},
 \]
 which, taking into account the relations between parameters, is less than $1$.~\qed

\end{document}